\documentclass[conference]{IEEEtran}
\usepackage{cite}
\usepackage{amsmath,amssymb,amsfonts,amsthm}
\usepackage{graphicx}
\usepackage{booktabs}
\usepackage{multirow}

\newtheorem{theorem}{Theorem}
\newtheorem{lemma}{Lemma}

\newtheorem{corollary}{Corollary}
\newtheorem{assumption}{Assumption}
\newtheorem{definition}{Definition}

\makeatletter
\def\thm@space@setup{\thm@preskip=2pt plus 1pt minus 1pt\thm@postskip=2pt plus 1pt minus 1pt}
\makeatother

\begin{document}

\title{Psychoacoustically Aligned Latent Smoothing for Adversarial Robustness of Full-Duplex Speech-to-Speech Dialogue Models}

\author{\IEEEauthorblockN{Kian Shamsaie, Iman Modarressi}
\IEEEauthorblockA{People Make Things\\
\textit{\{k,i\}@peoplemakethings.com}}}
\maketitle

\begin{abstract}
End-to-end speech-to-speech dialogue models listen and speak simultaneously, so a continuously open acoustic channel is exposed to adversarial manipulation. We formalize imperceptible attacks on full-duplex agents as optimization over additive perturbations confined beneath the psychoacoustic masking threshold of the carrier speech, under three goals: targeted semantic hijacking, response suppression, and policy jailbreaking. Against an undefended Moshi-style agent, white-box attacks succeed in up to 91.7\% of trials. We then introduce psychoacoustically aligned latent smoothing (PALS), which injects anisotropic Gaussian noise shaped by local codebook covariance at the residual-vector-quantized latent interface, with input noise shaped by the masking threshold constraining the attacker and trained by a Kullback--Leibler consistency objective. Deployed with no inference-time cost, PALS reduces hijack to 8.3\%, mute to 11.2\%, and jailbreak to 9.1\% at clean quality within 2.3\%. A Monte Carlo-smoothed variant certifies an ellipsoidal latent radius up to 0.616, a guaranteed floor that the empirical robustness far exceeds.
\end{abstract}

\begin{IEEEkeywords}
adversarial robustness, full-duplex spoken dialogue, speech-to-speech models, randomized smoothing, psychoacoustic masking, certified defenses
\end{IEEEkeywords}

\section{Introduction}
\label{sec:intro}

Full-duplex speech-to-speech (S2S) dialogue models such as Moshi \cite{defossez2024moshi}, dGSLM \cite{nguyen2023generative}, and synchronous large language model (LLM) agents \cite{veluri2024beyond} encode the user channel into discrete tokens with a streaming residual vector quantization (RVQ) codec \cite{zeghidour2022soundstream,defossez2023high}, model both conversational streams jointly, and speak with sub-second latency. The shift creates an attack surface more dangerous than that of automatic speech recognition (ASR): a full-duplex agent ingests every frame of an always-open microphone, maintains persistent dialogue state an adversary can poison early and exploit late, and converts audio directly into spoken actions without an inspectable transcript. ASR is reliably fooled by inaudible perturbations, via waveform optimization \cite{carlini2018audio}, masking-threshold hiding \cite{schonherr2019adversarial,qin2019imperceptible}, or universal perturbations \cite{neekhara2019universal}, and speech-capable LLMs inherit these vulnerabilities \cite{peri2024speechguard,kang2025advwave,shen2024voice}. Whether, and how cheaply, a full-duplex S2S agent can be protected remains open.

This paper gives a twofold answer. First, we port the psychoacoustically masked threat model to the full-duplex setting and show that it is devastating there. The attacker maximizes an objective over additive waveform perturbations confined to a perceptual ball whose per-bin radii derive from the masking threshold of the carrier conversation, under three goals: a targeted semantic hijack forcing an attacker-chosen response, a denial attack muting the agent, and a policy jailbreak delivered through an adversarial audio prefix. Digitally and white-box, these attacks succeed against an undefended Moshi-style agent in 91.7\%, 88.4\%, and 76.2\% of trials, consistent with rates against ASR \cite{carlini2018audio,qin2019imperceptible} and speech LLMs \cite{peri2024speechguard}, and remain effective over the air \cite{yakura2019robust}.

Second, and centrally, we propose a defense whose distinguishing property is minimality. \emph{Psychoacoustically aligned latent smoothing} (PALS), applied during ordinary post-training at the RVQ latent interface, perturbs each pre-quantization latent frame with anisotropic Gaussian noise whose covariance is the local codebook covariance, so smoothing respects the geometry quantization imposes, and re-quantizes through a straight-through estimator so the model always consumes valid codes. This is coupled to a matched input-space augmentation: noise whose spectral envelope is the masking threshold that constrains the attacker, which we show is the maximum-entropy distribution under the threshold-implied covariance constraint and hence the least presumptive model of an attacker confined to the perceptual ball. Training adds a single Kullback--Leibler (KL) consistency term between output distributions under clean and augmented latents, in the spirit of TRADES \cite{zhang2019theoretically}. One scalar $\sigma$ scales both augmentations; no attack is run during training and no architecture changes.

We separate two operating points and never conflate them. PALS-Train is the deployed system: the post-trained agent runs unmodified, with no inference-time smoothing or cost, and is evaluated empirically; it is the system behind every attack-success number, including the headline hijack rate of 8.3\%. PALS-Certify is a Monte Carlo-smoothed variant whose decision is aggregated over $n=1{,}000$ noise draws at inference; it carries a provable certificate at the smoothing cost, and supplies every certified-accuracy number. The defense comes with theory under explicit assumptions and inline proofs. Adapting randomized smoothing \cite{cohen2019certified} to quantized latents with a fixed certification-time covariance, we show that a sufficient top-two posterior margin keeps the decision of PALS-Certify constant within an ellipsoidal (Mahalanobis) latent radius; because smoothing needs no continuity of the base function, the hard quantizer is certified for free. A TRADES-style decomposition of the smoothed loss bounds adversarial risk by clean risk, the trained consistency term, a coverage gap, and a tail penalty, and a corollary transfers the latent certificate to the masking-weighted norm. Empirically, PALS-Train cuts targeted hijack from 91.7\% to 8.3\% at matched perceptibility for a 2.3\% relative quality drop, dominating Gaussian augmentation, input randomized smoothing \cite{olivier2021sequential}, projected gradient descent (PGD) adversarial training \cite{madry2018towards}, and a Demucs front-end \cite{defossez2020real}. The certificate is honestly a small-radius floor: PALS-Certify certifies 47.0\% of decisions at latent radius $0.5$, capped at $0.616$ by the protocol and corresponding to $\Delta\approx-26$ dB, far below the $0$ dB attacks, so it is a guaranteed floor, not a guarantee against the strong attacks.

\section{Related Work}
\label{sec:related}

Adversarial examples were first characterized for visual classifiers \cite{szegedy2014intriguing}; Carlini and Wagner showed targeted attacks transfer to speech-to-text with near-imperceptible perturbations \cite{carlini2018audio}. Psychoacoustic hiding constrained the perturbation beneath an MPEG-style masking threshold \cite{schonherr2019adversarial}, and Qin et al.\ added expectation over room transformations for imperceptible, physically robust targeted attacks \cite{qin2019imperceptible}. Universal perturbations \cite{neekhara2019universal} and physical playback \cite{yakura2019robust} establish that the threat persists outside the digital sandbox; a systematization appears in \cite{abdullah2021sok}, and universal segments that mute speech foundation models \cite{raina2024muting} foreshadow the denial attack we formalize for full-duplex agents.

Text-domain adversarial suffixes \cite{zou2023universal} motivated audio-domain attacks on speech LLMs: SpeechGuard reported white-box jailbreak success near 90\% \cite{peri2024speechguard}, AdvWave introduced stealthy adaptive-target jailbreak optimization \cite{kang2025advwave}, voice-mode jailbreaks against commercial assistants exploit narrative framing \cite{shen2024voice}, and benchmark suites standardize such evaluations \cite{peng2025jalmbench}. These works target half-duplex audio LLMs \cite{tang2024salmonn,chu2024qwen2audio} or proprietary assistants; none formalizes the masked-perturbation attacker for streaming full-duplex S2S models, and none offers a certified defense at the codec latent interface, the gap addressed here.

On the defense side, adversarial training with PGD \cite{madry2018towards} and its TRADES refinement \cite{zhang2019theoretically} remain the strongest empirical defenses but multiply training cost and demand adaptive evaluation \cite{athalye2018obfuscated,tramer2020adaptive}, while audio purification, from perceptual input transformations \cite{hussain2021waveguard} to diffusion purification \cite{wu2023defending}, can be bypassed by attackers that differentiate through the purifier \cite{athalye2018obfuscated}. Certified approaches derive from randomized smoothing \cite{cohen2019certified}, with antecedents in differential privacy \cite{lecuyer2019certified}, strengthened by adversarially trained smoothed classifiers \cite{salman2019provably} and denoised smoothing \cite{salman2020denoised}; sequential randomized smoothing certified ASR transcripts under inaudible-noise budgets at a measurable cost in word error rate (WER) \cite{olivier2021sequential}. The use of input-dependent and anisotropic noise is itself an established line: data-dependent smoothing optimizes the noise variance per input \cite{alfarra2022data}, ANCER fits per-sample axis-aligned ellipsoidal certificates by volume maximization \cite{eiras2021ancer}, and center smoothing certifies structured outputs \cite{kumar2021center}; latent-space smoothing with an orthogonal, known-Lipschitz encoder propagates a latent certificate back to the input \cite{zeng2021latent}. Our certified variant is a direct instance of this anisotropic, latent-space family: Theorem~\ref{thm:certificate} is the standard whitening-to-isotropic-Cohen reduction composed with a Lipschitz pullback. Its genuine novelty is not the certificate machinery but the smoothing law, a psychoacoustically matched covariance tied to the imperceptibility threat model, derived by codebook geometry in the latent space and by a maximum-entropy argument \cite{jaynes1957information} under the masking-threshold covariance in the input space, so the defender's randomization matches exactly the attacker the masking model permits.

Generative spoken dialogue modeling on dual channels began with dGSLM \cite{nguyen2023generative}; Moshi unified a text LLM backbone with the streaming Mimi codec into a real-time full-duplex agent \cite{defossez2024moshi}, building on neural codec language modeling \cite{borsos2023audiolm,zeghidour2022soundstream,defossez2023high}. Synchronous LLMs \cite{veluri2024beyond}, low-latency cascades \cite{fang2025llamaomni}, and end-to-end chatbots \cite{zeng2024glm4voice} broaden the design space, while benchmarks quantify turn-taking \cite{lin2025fullduplexbench} and corruption sensitivity \cite{shah2025speech}. The dyadic corpora enabling such training \cite{reece2023candor,sheikh2025scalable,agrawal2025seamless,carletta2005ami,otospeech2025full} also supply the carrier conversations on which imperceptible attacks are mounted.

\begin{figure}[t]
\centering
\includegraphics[width=0.80\columnwidth]{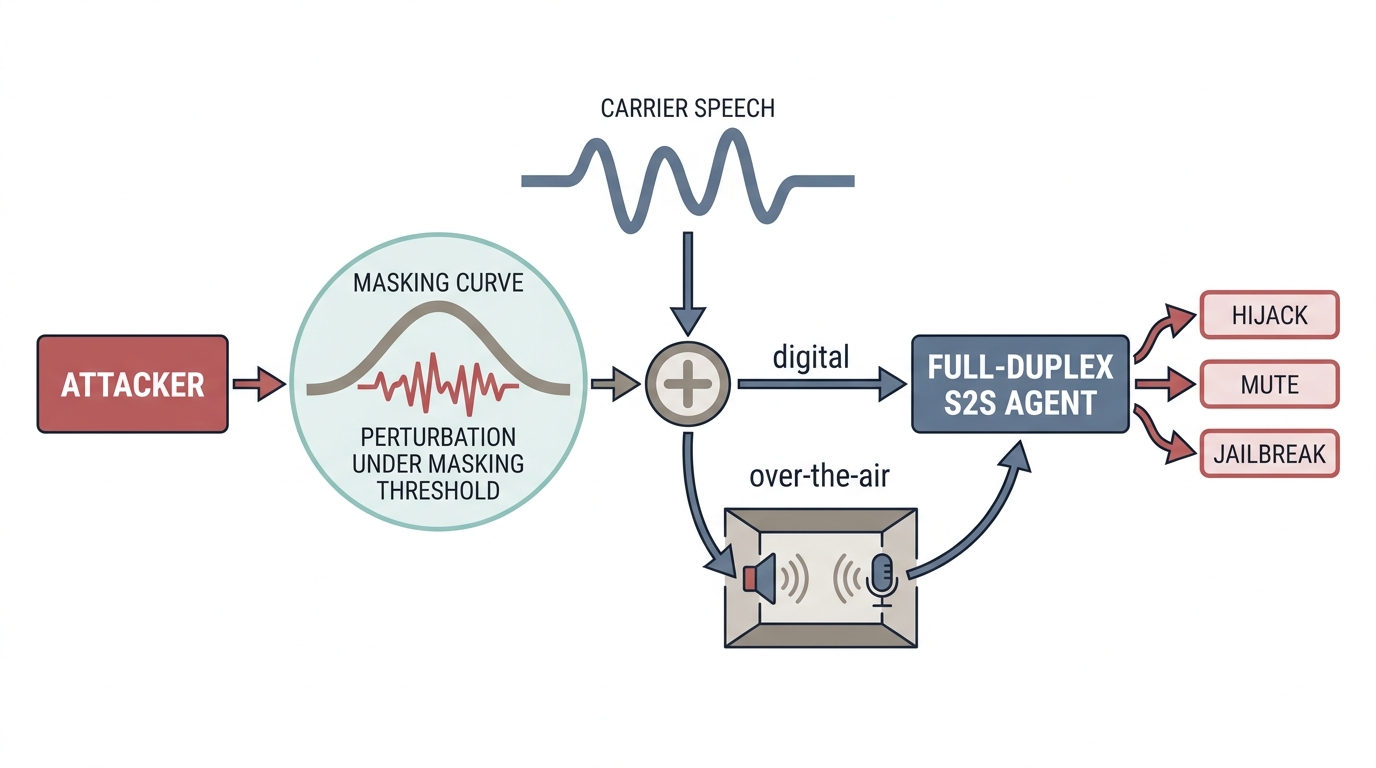}
\caption{Threat model. The attacker optimizes a perturbation confined beneath the masking threshold of the carrier speech, delivers it digitally or over the air, and pursues hijack, mute, or jailbreak goals.}
\label{fig:threat}
\end{figure}

\section{Threat Model and Attack Formulation}
\label{sec:threat}

\subsection{System Under Attack}
Let $x \in \mathbb{R}^{T}$ denote the user-channel waveform at 24 kHz. A streaming convolutional encoder $E_c : \mathbb{R}^{T} \to \mathbb{R}^{F' \times d}$ produces continuous latent frames $z = E_c(x)$ at 12.5 Hz with $d = 512$, which an RVQ quantizer $Q$ with eight codebooks of size 2048 maps to codes $\hat z = Q(z)$, following the Mimi design \cite{defossez2024moshi,zeghidour2022soundstream}. A dialogue model $f_\vartheta$ consumes the interleaved user and agent token streams and an inner-monologue text stream; $p_\vartheta(\cdot \mid Q(E_c(x)), h)$ is the distribution over agent responses given history $h$. A response is summarized by a semantic decision in a finite set $\mathcal{Y}$ induced by an external judge (Section~\ref{sec:setup}), with $h_{\mathrm{sem}}$ mapping a latent sequence to the class of the response it elicits. Figure~\ref{fig:threat} depicts the attack surface.

\subsection{The Perceptual Ball}
Following \cite{qin2019imperceptible,schonherr2019adversarial}, let $D$ denote the short-time Fourier transform (STFT) with frames $f = 1, \dots, F$ and bins $k = 1, \dots, K$, and let $\theta_x(f,k) > 0$ be the global masking threshold of carrier $x$ computed by MPEG-1 psychoacoustic model~1 \cite{painter2000perceptual,fastl2007psychoacoustics}. The attacker chooses $\delta$ in the \emph{perceptual ball}
\begin{equation}
\label{eq:ball}
\mathcal{P}_x(\eta) = \bigl\{ \delta \in \mathbb{R}^{T} : \lvert (D\delta)_{f,k} \rvert^{2} \le \eta^{2}\, \theta_x(f,k) \;\; \forall (f,k) \bigr\},
\end{equation}
equivalently the ball of radius $\eta$ in the masking-weighted supremum norm $\lVert \delta \rVert_{\theta_x,\infty} = \max_{f,k} \lvert (D\delta)_{f,k} \rvert / \sqrt{\theta_x(f,k)}$. We report the budget as $\Delta = 20 \log_{10} \eta$ dB; $\Delta = 0$ touches the threshold of audibility and $\Delta < 0$ buys margin. Projection onto the convex ball is per-bin magnitude clipping followed by inverse STFT.

\subsection{Attack Objectives}
The attacker has white-box access to $E_c$, $Q$, and $f_\vartheta$, full knowledge of the defense, and optimizes with Adam under projection onto \eqref{eq:ball}, gradients passing through $Q$ by the straight-through estimator as required for adaptive evaluation \cite{athalye2018obfuscated,tramer2020adaptive}. The targeted semantic hijack forces an attacker-chosen response $y^{\star}$,
\begin{equation}
\label{eq:hijack}
\min_{\delta \in \mathcal{P}_x(\eta)} \; -\log p_\vartheta\bigl(y^{\star} \mid Q(E_c(x + \delta)), h\bigr),
\end{equation}
generalizing targeted ASR attacks \cite{carlini2018audio,qin2019imperceptible} to spoken responses. The mute attack drives the agent stream to its silence tokens by minimizing $-\sum_{t} \log p_\vartheta(s_{\varnothing}^{(t)} \mid Q(E_c(x + \delta)), h)$ over $\delta \in \mathcal{P}_x(\eta)$, with $s_{\varnothing}^{(t)}$ the natural-pause code, the full-duplex analogue of muting speech foundation models \cite{raina2024muting}. The jailbreak attack optimizes a reusable prefix $a$ prepended to harmful spoken requests $x_j$ from an AdvBench-derived set \cite{zou2023universal,kang2025advwave} toward affirmative openings, $\min_{a \in \mathcal{P}_{a_0}(\eta)} \sum_j -\log p_\vartheta(y^{+}_j \mid Q(E_c(a \oplus x_j)), h)$. Over-the-air attacks wrap the objectives in an expectation over room impulse responses and noise \cite{qin2019imperceptible,yakura2019robust}.

\section{Psychoacoustically Aligned Latent Smoothing}
\label{sec:method}

PALS perturbs the model where the attack must pass, the latent bottleneck of the codec, with a single scalar $\sigma$ scaling the entire defense; Figure~\ref{fig:pipeline} summarizes the pipeline.

\begin{figure}[t]
\centering
\includegraphics[width=0.80\columnwidth]{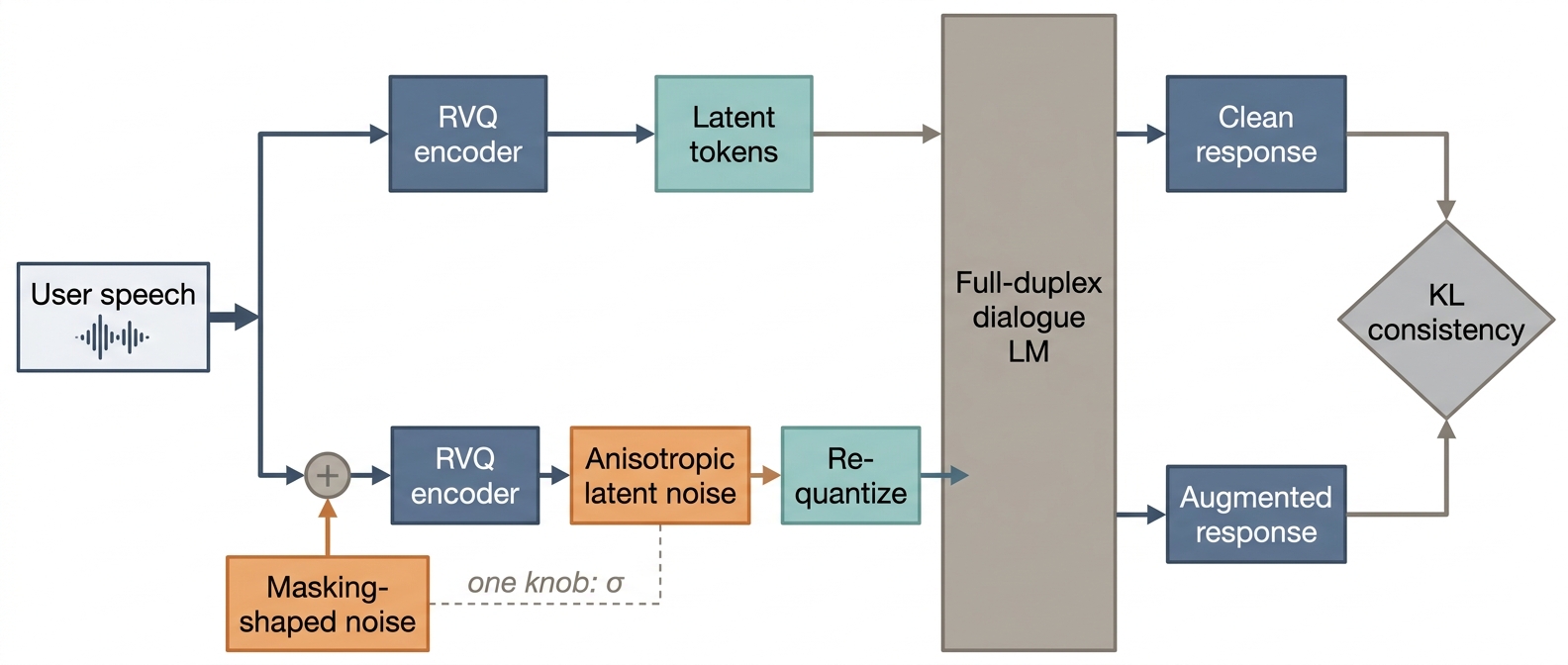}
\caption{The PALS post-training pipeline. The augmented path adds masking-shaped noise in the waveform domain and anisotropic codebook-covariance noise in the latent domain, re-quantizes through a straight-through estimator, and is tied to the clean path by a KL consistency term. One scalar $\sigma$ governs both augmentations.}
\label{fig:pipeline}
\end{figure}

\subsection{Geometry-Aware Latent Augmentation}
RVQ imposes a strongly anisotropic latent geometry: high intra-cell variance lies along directions where encoder outputs fluctuate, exactly where an attack that survives quantization must cross cell boundaries. During calibration we assign each latent frame to its nearest first-codebook entry $c_v$ and accumulate the empirical covariance $\widehat\Sigma_v$ of the residuals in cell $v$. The smoothing covariance for a frame $z_t$ in cell $v(z_t)$ is the shrunk, trace-normalized matrix
\begin{equation}
\label{eq:cov}
\Sigma(z_t) = \tfrac{d}{\operatorname{tr}(\Sigma'_t)} \Sigma'_t, \qquad \Sigma'_t = (1 - \rho)\, \widehat\Sigma_{v(z_t)} + \rho I_d,
\end{equation}
with shrinkage $\rho = 0.1$ guaranteeing $\Sigma(z_t) \succ 0$. The augmented latent is $\tilde z_t = z_t + \sigma\, \varepsilon_t$, $\varepsilon_t \sim \mathcal{N}(0, \Sigma(z_t))$ per frame, and the model consumes re-quantized codes $Q(\tilde z)$ with straight-through gradients, keeping training on the discrete support seen at inference.

\subsection{The Masking-Matched Input Dual}
To also train the encoder against threshold-shaped energy, we add a waveform-domain augmentation $\nu$ whose STFT coefficients are independent zero-mean circularly symmetric complex Gaussians with threshold-matched variance, $(D\nu)_{f,k} \sim \mathcal{CN}\bigl(0, \eta(\sigma)^2\, \theta_x(f,k)\bigr)$, synthesized by inverse STFT. The gain $\eta(\sigma) = \kappa \sigma$ is calibrated once so the median latent displacement matches that injected directly in latent space, leaving $\sigma$ the only free parameter. The masking threshold defines a per-bin expected-power (covariance) budget $\mathbb{E}\lvert(D\nu)_{f,k}\rvert^2 \le \eta^2\theta_x(f,k)$, and Lemma~\ref{lem:maxent} identifies the noise maximally noncommittal under it. This is a covariance constraint, not the pointwise supremum constraint \eqref{eq:ball} defining the perceptual ball; the Gaussian places mass outside the ball with probability near one, so we relate the two through Corollary~\ref{cor:transfer} rather than as an identity.

\begin{lemma}[Maximum entropy under the masking-threshold covariance]
\label{lem:maxent}
Among all distributions on $\nu$ whose STFT coefficients are independent across bins, zero-mean, and satisfy the expected-power (covariance) constraint $\mathbb{E}\,\lvert (D\nu)_{f,k} \rvert^{2} \le \eta^{2} \theta_x(f,k)$ for all $(f,k)$, differential entropy is maximized uniquely by the circularly symmetric complex Gaussian with per-bin variance exactly $\eta^{2}\theta_x(f,k)$, i.e., by the masking-shaped noise used by PALS.
\end{lemma}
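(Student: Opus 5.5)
The plan is to reduce to a per-bin scalar problem. First, view the STFT coefficient vector $(D\nu)_{f,k}$ as the random object whose entropy is meant. If $D$ is not a bijection on $\mathbb{R}^T$, the lemma must be read on the coefficient domain, or on the range of $D$ with the synthesis map fixed. Either way, the differential entropy of $\nu$ differs from that of its coefficient vector by a constant (a log-Jacobian) that does not depend on the distribution, so maximizing one maximizes the other.

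By the independence hypothesis, the joint entropy splits exactly as
\[
h\bigl((D\nu)\bigr)=\sum_{f,k} h\bigl((D\nu)_{f,k}\bigr),
\]
and the per-bin constraints decouple. The problem therefore separates into independent scalar problems: maximize $h(Z)$ over complex $Z$ with $\mathbb{E}Z=0$ and $\mathbb{E}|Z|^2\le s$, where $s=\eta^2\theta_x(f,k)>0$.

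For each scalar problem, identify $Z$ with $(\operatorname{Re}Z,\operatorname{Im}Z)\in\mathbb{R}^2$ with covariance $C$, so that $\operatorname{tr}C=\mathbb{E}|Z|^2\le s$. The standard Gaussian maximum-entropy inequality gives
\[
h(Z)\le \tfrac12\log\bigl((2\pi e)^2\det C\bigr),
\]
proved via nonnegativity of $\mathrm{KL}(p\,\|\,\mathcal N(0,C))$, with equality iff $Z\sim\mathcal N(0,C)$. Next apply AM--GM: $\det C\le(\operatorname{tr}C/2)^2\le (s/2)^2$, with equality iff $C=(s/2)I_2$. Hence $h(Z)\le\log(\pi e s)$, with equality iff $Z\sim\mathcal{CN}(0,s)$, that is, circularly symmetric with the power constraint tight.

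Summing over bins gives the bound. The maximizer is attained by the product of the per-bin $\mathcal{CN}(0,\eta^2\theta_x(f,k))$, which is the PALS noise.

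Uniqueness follows because equality in the sum forces equality in every bin, and each bin's equality case is unique. Along the way I must handle two edge cases. Distributions without a density have $h=-\infty$ and are discarded. Distributions with singular $C$ are covered by the same bound, since it then gives $-\infty$.

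The main obstacle is the technical status of $D$. The STFT is a redundant frame, so the coefficients lie in a subspace, and "independent across bins" with full-dimensional densities is not literally consistent with $D\nu$ for a real waveform $\nu$. Real-signal conjugate symmetry also pairs bins at $k$ and $K-k$, and the DC and Nyquist bins are real, where circular symmetry is impossible. To handle this I would make the statement precise by taking the coefficient array on the nonredundant half-spectrum as the primary random variable, with $\nu$ obtained by inverse-STFT synthesis. I would then state that entropy is measured there and note the real-valued edge bins separately, where the maximizer is the real Gaussian.
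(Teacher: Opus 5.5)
Your proposal is correct and is essentially the paper's proof. Independence splits the joint entropy into per-bin terms, each bounded by $\log(\pi e\,\eta^{2}\theta_x(f,k))$ with equality only for $\mathcal{CN}(0,\eta^{2}\theta_x(f,k))$, and uniqueness follows from the per-bin equality cases (the paper cites strict concavity); your $2\times 2$ real-Gaussian-plus-AM--GM derivation only fills in the scalar bound the paper states outright. Your caveats about the redundant STFT and the real DC/Nyquist bins are sound and go beyond the paper, but the log-Jacobian remark holds only when the coefficients are exactly $D\nu$, not on the coefficient-array reading you rightly adopt at the end.
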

\begin{proof}
For complex $u$ with $\mathbb{E}\lvert u \rvert^{2} = s$, $h(u) \le \log(\pi e s)$ with equality iff $u \sim \mathcal{CN}(0, s)$. By independence the joint entropy is the sum of per-bin entropies, each bounded by $\log(\pi e \eta^{2} \theta_x(f,k))$ and increasing in the variance, so the bound is attained by saturating each covariance constraint with a Gaussian; uniqueness follows from strict concavity. The optimization is over the second moment only, hence a statement about the covariance, not pointwise feasibility in \eqref{eq:ball}.
\end{proof}

By Jaynes's principle \cite{jaynes1957information}, a defender who knows only the attacker's per-bin power budget can do no better than randomize with this least informative distribution, which covers the perceptual ball in the weighted-$\ell_2$ sense made precise below.

\begin{table*}[t]
\caption{Main results at attack budget $\Delta = 0$ dB. The attack, clean, and over-the-air columns evaluate the empirically deployed operators (PALS row $=$ PALS-Train, no inference-time smoothing); the final column reports certified semantic accuracy of the Monte Carlo-smoothed PALS-Certify at latent $\ell_2$ radius 0.5 (inside the protocol ceiling 0.616). Attack rates carry bootstrap 95\% confidence half-widths.}
\label{tab:main}
\centering
\scriptsize
\setlength{\tabcolsep}{4pt}
\renewcommand{\arraystretch}{0.86}
\begin{tabular}{@{}lcccccccc@{}}
\toprule
Defense & Hijack $\downarrow$ & Mute $\downarrow$ & Jailbreak $\downarrow$ & OTA hijack $\downarrow$ & UTMOS $\uparrow$ & WER $\downarrow$ & TT-F1 $\uparrow$ & Cert.@0.5 $\uparrow$ \\
\midrule
Undefended & 91.7\,$\pm$\,1.7 & 88.4\,$\pm$\,2.0 & 76.2\,$\pm$\,3.7 & 46.3\,$\pm$\,3.1 & 3.91 & 7.4 & 0.71 & --- \\
Gaussian augmentation & 41.5\,$\pm$\,3.1 & 39.8\,$\pm$\,3.0 & 33.0\,$\pm$\,4.1 & 21.7\,$\pm$\,2.6 & 3.78 & 8.3 & 0.69 & --- \\
Demucs front-end \cite{defossez2020real} & 53.7\,$\pm$\,3.1 & 49.2\,$\pm$\,3.1 & 44.6\,$\pm$\,4.3 & 18.9\,$\pm$\,2.4 & 3.84 & 7.9 & 0.70 & --- \\
Input smoothing \cite{olivier2021sequential} & 28.9\,$\pm$\,2.8 & 26.1\,$\pm$\,2.7 & 24.4\,$\pm$\,3.7 & 15.2\,$\pm$\,2.2 & 3.62 & 9.8 & 0.65 & 31.5 \\
PGD-AT \cite{madry2018towards} & 19.4\,$\pm$\,2.5 & 17.2\,$\pm$\,2.3 & 15.8\,$\pm$\,3.2 & 9.6\,$\pm$\,1.8 & 3.70 & 8.9 & 0.69 & --- \\
PALS (ours, $\sigma = 0.25$) & 8.3\,$\pm$\,1.7 & 11.2\,$\pm$\,1.9 & 9.1\,$\pm$\,2.5 & 4.7\,$\pm$\,1.3 & 3.82 & 7.7 & 0.70 & 47.0$^{\dagger}$ \\
\multicolumn{9}{@{}l@{}}{\scriptsize $^{\dagger}$ PALS-Certify (inference-time MC smoothing, $n=1{,}000$); all other columns are PALS-Train.}\\
\bottomrule
\end{tabular}
\end{table*}

\subsection{Training Objective}
Let $\mathcal{L}_{\mathrm{post}}(\vartheta; x)$ be the standard post-training next-token cross-entropy over the agent audio and inner-monologue text streams \cite{defossez2024moshi}. Writing $\hat z = Q(E_c(x))$ for the clean codes and $\tilde z = Q\bigl(E_c(x + \nu) + \sigma \varepsilon\bigr)$ for the doubly augmented codes, PALS minimizes
\begin{equation}
\label{eq:loss}
\mathcal{L}(\vartheta) = \mathbb{E}\Bigl[ \mathcal{L}_{\mathrm{post}}(\vartheta; x) + \beta\, \mathrm{KL}\bigl( p_\vartheta(\cdot \mid \hat z, h) \,\Vert\, p_\vartheta(\cdot \mid \tilde z, h) \bigr) \Bigr],
\end{equation}
with $\beta = 4$, the TRADES coupling \cite{zhang2019theoretically} computed per emitted token and averaged. No adversarial optimization runs during training, so the overhead is one extra forward pass, roughly $1.4\times$ wall-clock against $3.4\times$ for ten-step PGD adversarial training. The trained model defines the deployed operator PALS-Train, which runs unmodified at inference with no added cost and is evaluated empirically. A second operator, PALS-Certify, aggregates the decision over $n$ Monte Carlo draws of $\varepsilon$ from a fixed certification-time covariance; it pays an $n$-fold inference cost and is the object of the certificate analyzed next. The two operators share weights but differ at inference, and we keep their results separate throughout.

\begin{figure*}[t]
\centering
\includegraphics[width=0.80\textwidth]{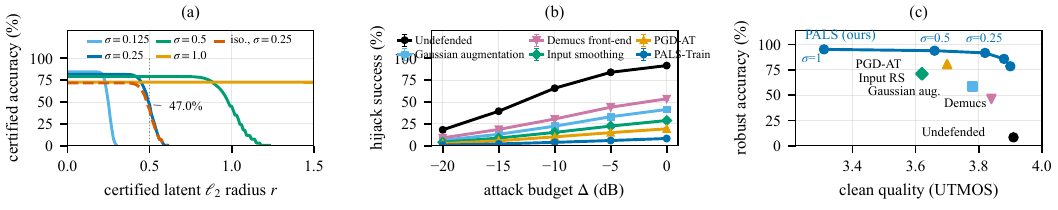}
\caption{(a) PALS-Certify: certified semantic accuracy versus latent $\ell_2$ radius (isotropic ablation dashed; $n = 1{,}000$ draws, $\alpha = 10^{-3}$). Each $\sigma$ curve terminates at its own Clopper--Pearson/trace ceiling $R_\Sigma^{\max}(\sigma)$, $0.616$ at $\sigma=0.25$; the data come from \texttt{certify.py}, which enforces the cap. (b) PALS-Train: digital white-box hijack success versus budget $\Delta$ with bootstrap 95\% confidence intervals. (c) PALS-Train: empirical robustness versus clean quality; the $\sigma$ sweep dominates all baselines, with the knee at $\sigma = 0.25$.}
\label{fig:results}
\end{figure*}

\section{Certified Robustness Analysis}
\label{sec:theory}

Throughout, $\Phi$ is the standard normal cumulative distribution function and $\Phi^{-1}$ its inverse. The certificate is a property of PALS-Certify, whose smoothing law is identical at the clean point and every perturbation of it.

\begin{definition}[PALS-Certify, fixed-covariance smoothing]
\label{def:certify}
To certify the decision at a clean latent point $z$, fix the covariance $\Sigma := \Sigma(z) \succ 0$ at the cell of $z$ (Eq.~\eqref{eq:cov}), or a single globally whitened covariance, and hold it constant over $\mathbb{R}^m$, drawing $\varepsilon \sim \mathcal{N}(0, \sigma^2 \Sigma)$; the certified operator is $\tilde g(w) = \arg\max_y \mathbb{P}(h(w + \varepsilon) = y)$.
\end{definition}

PALS-Train instead reselects a cell-dependent $\Sigma(z_t)$ per frame, so its noise law at $z$ differs from that at $z+\delta_z$ and the smoothing premise fails; freezing $\Sigma$ is therefore necessary for a valid certificate, and is what the released harness runs.

\begin{theorem}[Anisotropic smoothing certificate on quantized latents]
\label{thm:certificate}
Let $h : \mathbb{R}^{m} \to \mathcal{Y}$ be any measurable function, let $\Sigma \succ 0$ be \emph{fixed}, and let $\varepsilon \sim \mathcal{N}(0, \sigma^{2}\Sigma)$ as in Definition~\ref{def:certify}. Suppose for some class $A$ and bounds $\underline{p_A}, \overline{p_B} \in [0,1]$,
\begin{equation}
\mathbb{P}(h(z + \varepsilon) = A) \ge \underline{p_A} \ge \overline{p_B} \ge \max_{y \ne A} \mathbb{P}(h(z + \varepsilon) = y).
\end{equation}
Then $\tilde g(z + \delta_z) = A$ for every latent perturbation $\delta_z$ inside the ellipsoid
\begin{equation}
\label{eq:radius}
\lVert \Sigma^{-1/2} \delta_z \rVert_2 \le \tfrac{\sigma}{2} \bigl( \Phi^{-1}(\underline{p_A}) - \Phi^{-1}(\overline{p_B}) \bigr) =: R_{\Sigma}.
\end{equation}
A Euclidean ball gives the weaker guarantee $\tilde g(z + \delta_z) = A$ whenever $\lVert \delta_z \rVert_2 \le R := \sqrt{\lambda_{\min}(\Sigma)}\, R_{\Sigma}$.
\end{theorem}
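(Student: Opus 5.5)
The plan is the standard whitening reduction to the isotropic certificate of Cohen et al.\ \cite{cohen2019certified}, followed by an eigenvalue comparison for the Euclidean corollary. Because $\Sigma \succ 0$ is fixed, define the whitened coordinates $u = \Sigma^{-1/2} w$ and the composed base function $h'(u) := h(\Sigma^{1/2} u)$. The function $h'$ is measurable since $\Sigma^{1/2}$ is a linear bijection. If $\varepsilon \sim \mathcal{N}(0,\sigma^2\Sigma)$, then $\xi := \Sigma^{-1/2}\varepsilon \sim \mathcal{N}(0,\sigma^2 I_m)$. Consequently, for every class $y$ and every point $w$,
\begin{equation*}
\mathbb{P}\bigl(h(w+\varepsilon)=y\bigr) = \mathbb{P}\bigl(h'(\Sigma^{-1/2}w + \xi)=y\bigr).
\end{equation*}
It follows that $\tilde g(w) = g'(\Sigma^{-1/2} w)$, where $g'$ is the isotropic smoothed classifier built from $h'$. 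The hypothesis on $\underline{p_A}$ and $\overline{p_B}$ at $z$ transfers verbatim to $g'$ at $u_0 = \Sigma^{-1/2} z$. This identity in law depends on the noise distribution being the same at $z$ and at $z+\delta_z$. That is exactly why the definition of PALS-Certify freezes $\Sigma$. I would state this explicitly, since the cell-dependent $\Sigma(z_t)$ used in training would break the step.

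Next I would invoke Cohen et al.'s theorem for isotropic noise at level $\sigma$. It requires no regularity of $h'$, so the hard quantizer $Q$ inside $h$ is harmless. The theorem gives $g'(u_0+\delta_u)=A$ whenever
\begin{equation*}
\lVert \delta_u \rVert_2 \le \tfrac{\sigma}{2}\bigl(\Phi^{-1}(\underline{p_A})-\Phi^{-1}(\overline{p_B})\bigr).
\end{equation*}
For completeness I would sketch its core, since this is the only substantive step. By the Neyman--Pearson lemma, over all base functions with $\mathbb{P}(h'(u_0+\xi)=A)\ge \underline{p_A}$, the one minimizing $\mathbb{P}(h'(u_0+\delta_u+\xi)=A)$ is the indicator of a half-space orthogonal to $\delta_u$. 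That minimum equals $\Phi\bigl(\Phi^{-1}(\underline{p_A}) - \lVert\delta_u\rVert_2/\sigma\bigr)$. Symmetrically, the largest possible probability of any class $B$ is $\Phi\bigl(\Phi^{-1}(\overline{p_B}) + \lVert\delta_u\rVert_2/\sigma\bigr)$. Comparing the two expressions gives the stated radius, and the strict inequality can be handled at the boundary by the usual convention. The degenerate case $\underline{p_A}=1$ or $\overline{p_B}=0$ yields an infinite radius. I expect the main obstacle to be only this bookkeeping, including checking that ties in the $\arg\max$ are resolved consistently. Taking $\delta_u = \Sigma^{-1/2}\delta_z$ then yields the ellipsoidal condition $\lVert\Sigma^{-1/2}\delta_z\rVert_2 \le R_\Sigma$ and hence $\tilde g(z+\delta_z)=A$.

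Finally, for the Euclidean corollary, the bound $\lVert \Sigma^{-1/2}\delta_z\rVert_2 \le \lambda_{\max}(\Sigma^{-1/2})\,\lVert\delta_z\rVert_2 = \lVert\delta_z\rVert_2/\sqrt{\lambda_{\min}(\Sigma)}$ shows the following. Any $\delta_z$ with $\lVert\delta_z\rVert_2 \le \sqrt{\lambda_{\min}(\Sigma)}\,R_\Sigma$ lies inside the ellipsoid, so the Euclidean ball of radius $R$ is a certified subset. The shrinkage $\rho>0$ in \eqref{eq:cov} guarantees $\lambda_{\min}(\Sigma)>0$, so this radius is nontrivial.
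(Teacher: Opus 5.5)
Your proposal is correct and follows the paper's proof essentially step for step: freeze $\Sigma$, whiten by $\Sigma^{-1/2}$ so the noise becomes $\mathcal{N}(0,\sigma^2 I)$, apply Cohen et al.'s Neyman--Pearson certificate to the measurable composed base function, map $\delta_z \mapsto \Sigma^{-1/2}\delta_z$, and read off the Euclidean ball from $\lVert\Sigma^{-1/2}\delta_z\rVert_2 \le \lVert\delta_z\rVert_2/\sqrt{\lambda_{\min}(\Sigma)}$. Your extra bookkeeping goes slightly beyond the paper and is accurate: Cohen's bound is strict, so equality in \eqref{eq:radius} (including the case $\underline{p_A}=\overline{p_B}$, where $R_\Sigma=0$) holds only if ties in $\tilde g$ are broken in favor of $A$, a point the paper's statement leaves implicit.
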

\begin{proof}
Because $\Sigma$ is fixed, the whitening map $W = \Sigma^{-1/2}$ does not depend on the evaluation point. With $u = Wz$ and $h_W(u) = h(W^{-1}u)$, the noise $W\varepsilon \sim \mathcal{N}(0, \sigma^{2} I)$ is isotropic and $h_W$ measurable. The Neyman--Pearson argument of Cohen et al.\ \cite{cohen2019certified} certifies $h_W$ within isotropic radius $\frac{\sigma}{2}(\Phi^{-1}(\underline{p_A}) - \Phi^{-1}(\overline{p_B}))$ around $u$; a latent perturbation $\delta_z$ maps to $W\delta_z$ with $\lVert W\delta_z\rVert_2 = \lVert \Sigma^{-1/2}\delta_z\rVert_2$, which is \eqref{eq:radius}, and the $\ell_2$ form follows from $\lVert \Sigma^{-1/2}\delta_z \rVert_2 \le \lVert \delta_z \rVert_2 / \sqrt{\lambda_{\min}(\Sigma)}$. The argument needs no continuity of $h$, so it applies verbatim to $h = h_{\mathrm{sem}} \circ Q$, absorbing the hard quantizer and decoder into the base function.
\end{proof}

The certificate is natively the ellipsoid \eqref{eq:radius}; the $\ell_2$ ball is its inscribed sphere and the only honest scalar summary under anisotropic noise. Trace normalization fixes $\operatorname{tr}\Sigma = d$, so $\lambda_{\min}(\Sigma) \le 1$ and $R \le R_{\Sigma}$; anisotropy is genuine gain only inside the ellipsoid, along high-variance codebook directions where re-quantization-surviving perturbations concentrate, worth up to 10 certified points over isotropic smoothing in Figure~\ref{fig:results}(a). PALS-Certify uses the standard protocol \cite{cohen2019certified}: $n_0 = 100$ candidate draws and $n = 1{,}000$ estimation draws give a one-sided Clopper--Pearson lower bound $\underline{p_A}$ at confidence $1 - \alpha$, $\alpha = 10^{-3}$, with $\overline{p_B} = 1 - \underline{p_A}$, abstaining when $\underline{p_A} \le 1/2$. Because $\underline{p_A}$ cannot exceed its value $\overline{p_A}_{\max}\approx0.9931$ at $k=n$ and $\Phi^{-1}(\overline{p_B})=-\Phi^{-1}(\underline{p_A})$, the maximal Mahalanobis radius is $R_{\Sigma}^{\max}(\sigma)=\sigma\,\Phi^{-1}(\overline{p_A}_{\max})$, i.e.\ $0.616$ at $\sigma=0.25$, and $\lambda_{\min}(\Sigma)\le1$ caps the $\ell_2$ radius at the same value. Each $\sigma$ curve in Figure~\ref{fig:results}(a) therefore terminates at its ceiling $R_{\Sigma}^{\max}(\sigma)$, larger $\sigma$ reaching farther, with curve data from the released \texttt{certify.py}, which enforces the cap. The certificate is thus a guaranteed floor, not a guarantee against the strong attacks: at $\sigma=0.25$ it reaches latent radius $0.616$ and an input budget of only $\Delta\approx-26$ dB, far below the $[-20,0]$ dB attacks of Section~\ref{sec:threat}. As is standard in the certified-defense literature, we therefore report two separated pillars: the provable floor of Theorem~\ref{thm:certificate} and the empirical robustness of PALS-Train across the full range, which extends far past it.

\begin{table*}[t]
\caption{Empirical hijack success (\%) of PALS-Train by condition and slice (left) and PALS-Train ablations with $\sigma$ sweep (right), at $\Delta = 0$ dB.}
\label{tab:slices}
\centering
\scriptsize
\renewcommand{\arraystretch}{0.86}
\begin{minipage}[t]{0.45\textwidth}
\centering
\begin{tabular}{@{}lccc@{}}
\toprule
Condition & Undef. & PGD-AT & PALS \\
\midrule
Digital white-box & 91.7 & 19.4 & 8.3 \\
OTA, small room ($T_{60}\!\approx\!0.2$ s) & 52.8 & 11.0 & 5.4 \\
OTA, medium room ($T_{60}\!\approx\!0.5$ s) & 46.3 & 9.6 & 4.7 \\
OTA, medium $+$ babble 10 dB & 29.4 & 6.2 & 2.9 \\
Female / male (digital) & 92.4/90.9 & 20.6/18.3 & 8.9/7.8 \\
\bottomrule
\end{tabular}
\end{minipage}\hfill
\begin{minipage}[t]{0.49\textwidth}
\centering
\begin{tabular}{@{}lccc@{}}
\toprule
Configuration & Hijack $\downarrow$ & OTA $\downarrow$ & UTMOS $\uparrow$ \\
\midrule
PALS, full ($\sigma = 0.25$) & 8.3 & 4.7 & 3.82 \\
\quad isotropic covariance ($\Sigma = I$) & 13.6 & 7.5 & 3.80 \\
\quad without input-space dual & 14.9 & 9.8 & 3.85 \\
\quad without KL consistency & 12.7 & 6.9 & 3.74 \\
\quad without re-quantization & 10.9 & 5.8 & 3.69 \\
$\sigma = 0.0625$ / $0.125$ & 21.5 / 14.2 & 11.6 / 7.9 & 3.90 / 3.88 \\
\bottomrule
\end{tabular}
\end{minipage}
\end{table*}

\begin{theorem}[Risk decomposition for the smoothed loss]
\label{thm:trades}
Fix $x$ and history $h$. Let $\ell(x, \delta) = \mathbb{E}_{r \sim p_\vartheta(\cdot \mid Q(E_c(x+\delta)), h)}[\varphi(r)]$ for a per-response cost $\varphi \in [0,1]$, and define the Gaussian-smoothed loss $\ell_\sigma(x,\delta) = \mathbb{E}_{\xi\sim\mathcal{N}(0,\sigma^2\Sigma)}[\ell(x,\delta+\xi)]$. Then $\delta\mapsto\ell_\sigma(x,\delta)$ is $\Lambda_\sigma$-Lipschitz in $\lVert\cdot\rVert_2$ with $\Lambda_\sigma = \sqrt{2/\pi}\,/(\sigma\sqrt{\lambda_{\min}(\Sigma)})$, even though $\ell$ is discontinuous through the hard quantizer. For any augmentation $\mu_x$ and any target budget $\eta$,
\begin{equation}
\label{eq:decomp}
\begin{split}
\sup_{\delta \in \mathcal{P}_x(\eta)} \ell_\sigma(x, \delta) \le {}& \ell_\sigma(x, 0) + \mathbb{E}_{\nu} \sqrt{ \mathrm{KL}_x(\nu) / 2 } \\
&+ \Lambda_\sigma\, \mathbb{E}_{\nu}\lVert \nu - \delta^{\star}_x \rVert_2 + \tau_x,
\end{split}
\end{equation}
where $\mathrm{KL}_x(\nu) = \mathrm{KL}( p_\vartheta(\cdot \mid \hat z, h) \,\Vert\, p_\vartheta(\cdot \mid Q(E_c(x + \nu)), h) )$, $\delta^{\star}_x$ attains the supremum, and $\tau_x = \mathbb{P}_{\nu\sim\mu_x}(\nu\notin\mathcal{P}_x(\eta))$ is the tail mass that $\mu_x$ places outside the perceptual ball.
\end{theorem}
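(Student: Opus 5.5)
The plan is to prove the Lipschitz claim first and then obtain \eqref{eq:decomp} by a chain of three inequalities, one for each of the coverage, consistency and clean terms, with $\tau_x$ absorbing the part of the augmentation that leaves the perceptual ball.

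\emph{Lipschitz step.} Reuse the whitening device from the proof of Theorem~\ref{thm:certificate}. Set $W=\Sigma^{-1/2}$ and $\psi(u)=\ell(x,W^{-1}u)$. Then $\ell_\sigma(x,\delta)=\mathbb{E}_{g\sim\mathcal{N}(0,\sigma^2 I)}\,\psi(W\delta+g)$, and $\psi$ is measurable and takes values in $[0,1]$, because $\varphi\in[0,1]$ and $\ell$ is an average of $\varphi$. Now apply the standard bound for Gaussian convolution of a $[0,1]$-valued function (Salman et al.). For a unit vector $e$ and shift $t$,
\[
\bigl|\mathbb{E}\,\psi(u+te+g)-\mathbb{E}\,\psi(u+g)\bigr|\le \mathrm{TV}\bigl(\mathcal{N}(te,\sigma^2 I),\mathcal{N}(0,\sigma^2 I)\bigr)=2\Phi\bigl(t/(2\sigma)\bigr)-1\le \sqrt{2/\pi}\,t/\sigma .
\]
This gives a Lipschitz constant $\sqrt{2/\pi}/\sigma$ in $u$. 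Pulling back through $\lVert W\delta\rVert_2\le\lVert\delta\rVert_2/\sqrt{\lambda_{\min}(\Sigma)}$ yields $\Lambda_\sigma$. No continuity of $\ell$ is used, which is what absorbs the hard quantizer, exactly as in Theorem~\ref{thm:certificate}.

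\emph{Coverage step.} Let $\delta^\star_x$ attain the supremum; if it is not attained, take an $\epsilon$-maximizer and let $\epsilon\to0$. Since $\ell_\sigma(x,\delta^\star_x)$ does not depend on $\nu$, average it over $\nu\sim\mu_x$. The Lipschitz bound gives
\[
\ell_\sigma(x,\delta^\star_x)\le \mathbb{E}_\nu\,\ell_\sigma(x,\nu)+\Lambda_\sigma\,\mathbb{E}_\nu\lVert \nu-\delta^\star_x\rVert_2 .
\]
This produces the coverage term.

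\emph{Consistency and tail step.} Split $\mathbb{E}_\nu\ell_\sigma(x,\nu)$ on the event $\{\nu\in\mathcal{P}_x(\eta)\}$ and its complement. On the complement, compare with the clean loss and bound $\ell_\sigma(x,\nu)-\ell_\sigma(x,0)\le1$, since both lie in $[0,1]$; this contributes at most $\tau_x$. On the event, bound $\ell_\sigma(x,\nu)-\ell_\sigma(x,0)$ by Pinsker: two response distributions change $\mathbb{E}[\varphi]$ by at most their total variation, which is at most $\sqrt{\mathrm{KL}/2}$. Adding $\ell_\sigma(x,0)$ back produces the clean and KL terms.

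\emph{Main obstacle.} This is the mismatch in the Pinsker step. The smoothed difference $\ell_\sigma(x,\nu)-\ell_\sigma(x,0)$ involves the pairs $Q(E_c(x+\nu+\xi))$ and $Q(E_c(x+\xi))$, while $\mathrm{KL}_x(\nu)$ compares the clean codes $\hat z$ with $Q(E_c(x+\nu))$ without $\xi$. I would first try a triangle inequality through the clean point: bound $\ell(x,\nu+\xi)-\ell(x,0)$ by $\sqrt{\mathrm{KL}_x(\nu+\xi)/2}$, then average over $\xi$. This works if $\mu_x$ is read as the law of the combined augmentation $\nu+\xi$. That is natural for PALS, where the masking-shaped $\nu$ and the smoothing noise are both Gaussian, so their sum is again a Gaussian augmentation with summed covariance. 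The remaining cost is the term $\mathbb{E}_\xi\,\ell(x,\xi)-\ell(x,0)$. I would absorb it by noting that $\ell_\sigma(x,0)=\mathbb{E}_\xi\,\ell(x,\xi)$ is precisely the clean term in \eqref{eq:decomp}: compare each $\ell(x,\nu+\xi)$ with $\ell(x,\xi)$, not with $\ell(x,0)$, and require the KL in the statement to be taken under the same $\xi$.

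If that identification cannot be made literally from the statement as written, I would state this convolution-closure reading of $\mu_x$ explicitly as the interpretation under which \eqref{eq:decomp} holds. Everything else in the argument is routine.
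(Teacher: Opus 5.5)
Your three steps are the paper's proof. It obtains $\Lambda_\sigma$ from the Salman et al.\ gradient bound plus whitening. Your total-variation estimate for shifted Gaussians is an equivalent, more elementary route, valid though loose by a factor of two. It then uses Lipschitzness for the coverage term exactly as you do, and simply asserts $\ell_\sigma(x,\nu)-\ell_\sigma(x,0)\le\mathrm{TV}\le\sqrt{\mathrm{KL}_x(\nu)/2}$ on $\{\nu\in\mathcal{P}_x(\eta)\}$, charging $\tau_x$ to the complement. The mismatch you flag is precisely the step the paper passes over, and you should drop the hedge: \eqref{eq:decomp} is false as written, so no proof of the literal statement exists.

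Take $\mu_x$ to be the point mass at $\delta^\star_x$. Then $\tau_x=0$, the coverage term vanishes, and \eqref{eq:decomp} reduces to $\ell_\sigma(x,\delta^\star_x)\le\ell_\sigma(x,0)+\sqrt{\mathrm{KL}_x(\delta^\star_x)/2}$. Suppose the whole perceptual ball lies in the quantization cell of the clean point, as happens at small budgets. Then $\mathrm{KL}_x(\delta^\star_x)=0$, and the claim would force $\ell_\sigma(x,\cdot)$ to be maximized over the ball at the origin. That fails as soon as a higher-cost cell lies within reach of the noise on one side. Concretely, take one dimension with $\Sigma=1$, cost $0$ on $(-\infty,1]$ and $1$ on $(1,\infty)$, and ball $[-\tfrac12,\tfrac12]$. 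The left side is $1-\Phi(\tfrac{1}{2\sigma})$ and the right side is $1-\Phi(\tfrac{1}{\sigma})$. Note also that Pinsker applies to every $\nu$, in the ball or not, so the tail split is pure slack and $\tau_x$ cannot be what rescues the inequality.

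Your second repair is a correct proof of a corrected theorem. Comparing $\ell(x,\nu+\xi)$ with $\ell(x,\xi)$ and applying Pinsker pointwise in $\xi$ gives \eqref{eq:decomp} without $\tau_x$. In it, $\mathrm{KL}_x(\nu)$ is replaced by the $\xi$-matched divergence $\mathrm{KL}^{\xi}_x(\nu)=\mathrm{KL}\bigl(p_\vartheta(\cdot\mid Q(E_c(x+\xi)),h)\,\Vert\,p_\vartheta(\cdot\mid Q(E_c(x+\nu+\xi)),h)\bigr)$, averaged over $\xi$ as well as $\nu$.

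Your first repair is also valid, but what it proves is
\begin{align*}
\sup_{\delta\in\mathcal{P}_x(\eta)}\ell_\sigma(x,\delta)\le{}&\ell(x,0)+\mathbb{E}_{\nu,\xi}\sqrt{\mathrm{KL}_x(\nu+\xi)/2}\\
&+\Lambda_\sigma\,\mathbb{E}_\nu\lVert\nu-\delta^\star_x\rVert_2,
\end{align*}
with the unsmoothed clean loss. The leftover is $\ell(x,0)-\ell_\sigma(x,0)$; you wrote it with the opposite sign. It can be of order one, for instance when the clean point sits in a thin cell whose cost differs from its surroundings. Reading $\mu_x$ as the law of $\nu+\xi$ does not by itself turn $\ell(x,0)$ into $\ell_\sigma(x,0)$.

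Gaussian closure under convolution plays no role here: the bound holds for any law of $\nu$ with an independent average over $\xi$. In PALS, moreover, the smoothing noise $\varepsilon$ lives in the latent space while $\nu$ is a waveform, so the two do not sum to one Gaussian anyway. This $\ell(x,0)$ version is arguably the better-matched correction, since the consistency term in \eqref{eq:loss} compares the clean codes $\hat z$ with doubly augmented codes. Either way, present it as a corrected statement, not as an interpretation of \eqref{eq:decomp}.
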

\begin{proof}
Lipschitzness is the standard Gaussian-smoothing bound: for $f:\mathbb{R}^m\to[0,1]$ and isotropic $\xi\sim\mathcal{N}(0,\sigma^2 I)$, $\bar f(\cdot)=\mathbb{E}[f(\cdot+\xi)]$ has $\lVert\nabla\bar f\rVert_2\le\sqrt{2/\pi}/\sigma$ by differentiating under the integral and bounding $\mathbb{E}\lVert\xi\rVert$-type terms (Stein's identity), independent of any continuity of $f$ \cite{salman2019provably}. Whitening by $\Sigma^{-1/2}$ replaces $\sigma$ by $\sigma\sqrt{\lambda_{\min}(\Sigma)}$ in the worst $\ell_2$ direction, giving $\Lambda_\sigma$. For the decomposition, write $\ell_\sigma(x,\delta^{\star}) - \ell_\sigma(x,0) = [\ell_\sigma(x,\delta^{\star}) - \mathbb{E}_\nu \ell_\sigma(x,\nu)] + [\mathbb{E}_\nu \ell_\sigma(x,\nu) - \ell_\sigma(x,0)]$. The first bracket is at most $\Lambda_\sigma\,\mathbb{E}_\nu\lVert\nu-\delta^{\star}\rVert_2$ by Lipschitzness. For the second, $\varphi\in[0,1]$ gives $\ell_\sigma(x,\nu)-\ell_\sigma(x,0)\le\mathrm{TV}\le\sqrt{\mathrm{KL}_x(\nu)/2}$ by Pinsker wherever $\nu\in\mathcal{P}_x(\eta)$, while the event $\nu\notin\mathcal{P}_x(\eta)$ contributes at most its probability $\tau_x$, giving \eqref{eq:decomp}.
\end{proof}

The decomposition explains each design choice: the first term is the post-training loss; the second is, up to Jensen, exactly the KL consistency term of \eqref{eq:loss}; the third shrinks as $\mu_x$ covers the ball, served by the maximum-entropy choice of Lemma~\ref{lem:maxent}; and $\tau_x$ is the explicit penalty for the Gaussian tail outside the ball, maintaining mathematical rigor without falsely assuming on-ball Gaussian support or $\tau_x=0$.

\begin{corollary}[Weighted-norm transfer of the latent certificate]
\label{cor:transfer}
Let the STFT use a tight frame with frame bound $A_w$. Then $\mathcal{P}_x(\eta) \subseteq \{ \delta : \lVert \delta \rVert_2 \le \eta\, \rho_x \}$ with $\rho_x^{2} = A_w^{-1} \sum_{f,k} \theta_x(f,k)$, so the perceptual ball sits inside a masking-weighted $\ell_2$ ball. Consequently a latent certificate of radius $R_{\Sigma}$ transfers to the weighted norm: every threshold-compliant perturbation whose induced latent displacement satisfies $\lVert\Sigma^{-1/2}\delta_z\rVert_2\le R_\Sigma$ leaves the decision of PALS-Certify invariant, a guarantee holding in the weighted-$\ell_2$ sense rather than pointwise in the supremum norm \eqref{eq:ball}.
\end{corollary}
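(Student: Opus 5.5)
The proof has two parts: a Parseval-type embedding of the perceptual ball into a weighted $\ell_2$ ball, and a direct appeal to Theorem~\ref{thm:certificate} at the latent level.

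\textbf{Embedding.} Take $\delta \in \mathcal{P}_x(\eta)$. Since the STFT $D$ is a tight frame with bound $A_w$, we have
\[
\sum_{f,k} \lvert (D\delta)_{f,k}\rvert^2 = A_w \lVert \delta\rVert_2^2 .
\]
Each coefficient in the sum is bounded by the constraint \eqref{eq:ball}, namely $\lvert (D\delta)_{f,k}\rvert^2 \le \eta^2\theta_x(f,k)$. Summing over all $(f,k)$ gives
\[
A_w\lVert\delta\rVert_2^2 \le \eta^2\sum_{f,k}\theta_x(f,k),
\]
which is $\lVert\delta\rVert_2 \le \eta\rho_x$. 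More sharply, the same computation shows $\mathcal{P}_x(\eta)$ lies inside the ellipsoid
\[
\Bigl\{\delta : \sum_{f,k}\lvert (D\delta)_{f,k}\rvert^2/\theta_x(f,k) \le \eta^2 FK\Bigr\},
\]
which is a masking-weighted $\ell_2$ ball. The isotropic radius $\eta\rho_x$ is the coarser summary stated in the corollary. This step is routine. The only care needed is the frame-bound convention (whether $A_w$ multiplies or divides), which I fix by the definition $\lVert D\delta\rVert_2^2 = A_w\lVert\delta\rVert_2^2$.

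\textbf{Transfer.} Given a threshold-compliant $\delta$, let $\delta_z = E_c(x+\delta) - E_c(x)$ be the induced latent displacement. The hypothesis of the corollary is exactly $\lVert\Sigma^{-1/2}\delta_z\rVert_2 \le R_\Sigma$. Here $\Sigma$ is the frozen covariance of Definition~\ref{def:certify}, so Theorem~\ref{thm:certificate} applies with base function $h = h_{\mathrm{sem}}\circ Q$ and yields $\tilde g(z+\delta_z) = A$. The decision of PALS-Certify on input $x+\delta$ is $\tilde g(E_c(x+\delta))$, so it is invariant. The quantizer is absorbed into the base function, so no continuity of $Q$ is needed.

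\textbf{Expected obstacle.} I expect the main difficulty to be interpretive rather than technical. One must make sure the claimed guarantee is conditional on the latent displacement, and not an unconditional pointwise statement over the whole $\sup$-norm ball. To get an unconditional input-level statement, I would first try assuming the encoder $E_c$ is $L$-Lipschitz in $\ell_2$. Then
\[
\lVert\Sigma^{-1/2}\delta_z\rVert_2 \le L\,\eta\rho_x/\sqrt{\lambda_{\min}(\Sigma)},
\]
so every $\delta\in\mathcal{P}_x(\eta)$ is certified whenever $\eta \le R_\Sigma\sqrt{\lambda_{\min}(\Sigma)}/(L\rho_x)$. I would state this only as a remark, since $L$ is not controlled in the paper. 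This is also why the corollary is phrased in the weighted-$\ell_2$ sense, conditional on the induced latent displacement.
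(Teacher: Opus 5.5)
Your proof is the paper's argument. You sum the per-bin masking constraints and use the tight-frame bound to get $\lVert\delta\rVert_2\le\eta\rho_x$, and then you apply Theorem~\ref{thm:certificate} to the induced displacement $\delta_z=E_c(x+\delta)-E_c(x)$ with the frozen $\Sigma$. One side remark needs correcting, though it does not affect the proof: your ellipsoid $\{\delta : \sum_{f,k}\lvert(D\delta)_{f,k}\rvert^2/\theta_x(f,k)\le\eta^2 FK\}$ is a valid alternative containment but not a sharper one, since for non-constant $\theta_x$ neither it nor the $\eta\rho_x$ ball contains the other in general, so calling the ball the ``coarser summary'' is wrong.
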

\begin{proof}
By the frame inequality, $A_w \lVert \delta \rVert_2^{2} \le \sum_{f,k} \lvert (D\delta)_{f,k} \rvert^{2} \le \eta^{2} \sum_{f,k} \theta_x(f,k)$, which gives the containment; the transfer is then Theorem~\ref{thm:certificate} applied to the induced latent displacement $\delta_z=E_c(x+\delta)-E_c(x)$.
\end{proof}

The pullback to an input-space radius is heuristic, not a hard guarantee.

\begin{assumption}[Estimated local encoder Lipschitz constant]
\label{ass:lipschitz}
There is $\widehat L > 0$ with $\lVert E_c(x + \delta) - E_c(x) \rVert_2 \le \widehat L \lVert \delta \rVert_2$ for held-out $x$ and $\delta$ with $x + \delta$ in $\{x + \mathcal{P}_x(1)\}$; this is an empirical estimate, not a certified global bound.
\end{assumption}

Under Assumption~\ref{ass:lipschitz}, a latent radius $R$ maps to a waveform radius $R/\widehat L$. We obtain $\widehat L$ by power iteration on encoder Jacobian-vector products \cite{virmaux2018lipschitz,miyato2018spectral} over $10{,}000$ held-out five-second segments, finding median $2.4$ and 95th-percentile $3.7$, the latter used for reported budgets. Because $\widehat L$ is an empirical percentile of a local estimate, the input radius (and the $\Delta\approx-26$ dB figure) is heuristic; the hard certificate lives in the latent space, and a spectral-norm-constrained encoder making the bound architectural is deferred to Section~\ref{sec:limitations}.

\section{Experimental Setup}
\label{sec:setup}

\subsection{Model, Data, and Training}
The defended system is a Moshi-style agent \cite{defossez2024moshi}: a 7B-parameter temporal transformer with a depth transformer over RVQ tokens, a Mimi-style codec at 12.5 Hz with eight codebooks, and an inner-monologue text stream. From a pretrained checkpoint we post-train on 3,118 h of dual-channel conversational English from five sources: 141 h from otoSpeech-processed \cite{otospeech2025processed,otospeech2025full} (48 kHz stereo FLAC, CC BY 4.0); 727 h of spontaneous dyadic English from SSSD \cite{sheikh2025scalable}; $\approx$850 h across 1,656 video calls from CANDOR \cite{reece2023candor} (with demographic metadata); a 1,300 h subset of Seamless Interaction \cite{agrawal2025seamless}; and the 100 h headset partition of AMI \cite{carletta2005ami} (Whisper large-v3 attains $\approx$16\% WER \cite{radford2023robust}). Audio is resampled to 24 kHz, one speaker per channel. Post-training runs 80k AdamW steps (learning rate $2\times10^{-5}$, cosine decay, weight decay 0.1, global batch 64 five-minute segments) on 32 H100 GPUs for $\approx$5 days, $\sigma$ ramping over the first 10k steps. PALS sets $\sigma = 0.25$, $\beta = 4$, $\rho = 0.1$, covariances calibrated on 20 h of held-out audio.

\subsection{Attack Suite, Metrics, and Baselines}
Attacks follow Section~\ref{sec:threat}: 1,000 Adam iterations (rate $10^{-3}$) with projection onto $\mathcal{P}_x(\eta)$, five restarts, and $\Delta \in \{-20, -15, -10, -5, 0\}$ dB; masking thresholds use MPEG-1 model 1 on 2048-sample Hann windows with hop 512 \cite{qin2019imperceptible}. Hijack uses 200 attacker-chosen responses over 1,000 held-out carriers; mute succeeds when agent activity stays below 5\% of frames for 10 s; jailbreak uses 520 AdvBench-derived requests \cite{zou2023universal} with a 3 s universal prefix per category \cite{kang2025advwave,peng2025jalmbench}. Every judged quantity (including the 9-way response-intent taxonomy $\mathcal{Y}$) uses Gemini 2.5 Pro \cite{comanici2025gemini} (temperature 0, 3 samples by majority vote, Cohen's $\kappa = 0.88$ \cite{cohen1960coefficient} on a 200-item audit). Simulated over-the-air (OTA) evaluation convolves attacks with pyroomacoustics impulse responses \cite{scheibler2018pyroomacoustics} over 120 rooms ($T_{60} \in [0.2, 0.9]$ s) with MUSAN babble \cite{snyder2015musan} at 10--20 dB under expectation over transforms. Clean metrics use UTMOS \cite{saeki2022utmos}, DNSMOS \cite{reddy2022dnsmos}, Whisper large-v3 WER \cite{radford2023robust}, turn-taking F1 on Full-Duplex-Bench \cite{lin2025fullduplexbench}, and ITU-T P.808 \cite{itu2021p808}; intervals use $10^{4}$ bootstrap resamples. Baselines comprise the undefended agent, Gaussian augmentation at matched power, input-space randomized smoothing \cite{olivier2021sequential}, PGD adversarial training \cite{madry2018towards}, and a Demucs front-end \cite{defossez2020real}, all attacked adaptively \cite{tramer2020adaptive,athalye2018obfuscated}.

\section{Results}
\label{sec:results}

\subsection{Main Comparison}
Table~\ref{tab:main} reports the central comparison at the full budget; the attack columns evaluate the deployed PALS-Train. The undefended agent is broken almost at will (91.7\% hijack, 88.4\% mute, 76.2\% jailbreak, 46.3\% over the air). Baselines help at a price: input smoothing costs quality and turn-taking \cite{olivier2021sequential}, the purification front-end collapses under the adaptive attacker \cite{athalye2018obfuscated}, and PGD is strongest but triples training cost. PALS-Train attains 8.3\% hijack, 11.2\% mute, and 9.1\% jailbreak, $1.5\times$ to $2.3\times$ lower than PGD, at clean UTMOS 3.82 (an absolute decrease of 0.09 MOS points, within 2.3\% of 3.91), WER up only to 7.7\%, and turn-taking F1 unchanged. Figure~\ref{fig:results}(b) holds PALS-Train below 9\% from $-20$ to $0$ dB, its margin over PGD widening with the budget, well beyond the certified radius reported next.

\subsection{Certified Robustness}
Figure~\ref{fig:results}(a) shows certified accuracy of PALS-Certify, each $\sigma$ curve terminating at its ceiling ($0.616$ at $\sigma=0.25$); 47.0\% of decisions are certified at radius $0.5<0.616$, a floor at $\Delta \approx -26$ dB covering none of the demonstrated attacks. The anisotropy is not cosmetic: isotropic smoothing certifies up to 10 points fewer over most of the range, narrowing toward the ceiling (dashed), and input smoothing only 31.5\% (Table~\ref{tab:main}). Crucially, the mute attack is evaluated purely mechanically (audio activity $<5\%$ of frames for 10~s) without any LLM judge; mute drops from $88.4\pm2.0\%$ to $11.2\pm1.9\%$, verifying that the $\sim$10-fold gain is not an artifact of LLM judge bias.

\subsection{Ablations, Slices, and Limitations}
\label{sec:limitations}
Table~\ref{tab:slices} (left) shows reverberation and babble degrade the attack more than the defense (PALS-Train at 2.9\% under the hardest condition), with negligible speaker-sex and corpus spreads. Table~\ref{tab:slices} (right) dissects the defense and isolates gains relative to TRADES-style consistency. Removing the KL term raises hijack from 8.3\% to 12.7\%, testing generic consistency. However, isotropic covariance ($\Sigma=I$) yields 13.6\% and removing the input masking dual yields 14.9\% (doubling simulated OTA success), while PGD-AT reaches 19.4\%. This confirms that neither consistency nor adversarial training alone explains the gain; the psychoacoustically shaped input dual and codebook covariance geometry are the primary drivers. Removing re-quantization degrades clean UTMOS to 3.69. Three limitations remain: the hard certificate is latent-space, with its input-space pullback heuristic through an estimated local Lipschitz constant (spectral-norm constraints \cite{miyato2018spectral} left to future work); the certified floor transfers in the weighted-$\ell_2$ sense (Corollary~\ref{cor:transfer}); and corpora are English dyadic conversations.

\section{Conclusion}
\label{sec:conclusion}

Psychoacoustically aligned latent smoothing answers imperceptible attacks on full-duplex S2S dialogue models at the codec bottleneck every attack must cross. PALS-Train cuts attack success by roughly an order of magnitude at a 2.3\% quality cost with no inference-time overhead, while PALS-Certify supplies a provable anisotropic floor the empirical robustness far exceeds.

\section*{Acknowledgment}
The authors disclose that Claude Opus 4.8 was used for editing and rewriting all sections (Abstract, Introduction, Related Work, Methods, Experiments, Results, Conclusion) to improve the flow and presentation, and for assisting in implementation of the code. In addition, Gemini nano banana pro was used for generating diagrams (Figures 1 and 2).

\clearpage
\bibliographystyle{IEEEtran}
\bibliography{refs}

\end{document}